\theoremstyle{definition}
\theoremstyle{theorem}
\theoremstyle{prop}
\newtheorem{prop}{Proposition}
\theoremstyle{lemma}
\theoremstyle{remark}
\begin{document}

\title{Queuing Theoretic Models for Multicast and Coded-Caching in Downlink Wireless Systems}
\author[1]{Mahadesh Panju} 
\author[1]{Ramkumar Raghu} 
\author[1]{Vinod Sharma}  
\author[2]{Rajesh Ramachandran} 
\affil[1]{Indian Institute of Science, Bangalore, INDIA. \textit{\{mahadesh,ramkumar,vinod\}@iisc.ac.in}}
\affil[2]{CABS, DRDO, Bangalore, India. \textit{rajesh81@gmail.com}}
\maketitle
	\begin{abstract}
We consider a server connected to $L$ users over a shared finite capacity link. Each user is equipped with a cache. File requests at the users are generated as independent Poisson processes  according to a popularity profile from a library of $M$ files. The server has access to all the files in the library. Users can store parts of the files or full files from the library in their local caches. The server should send missing parts of the files requested by the users. The server attempts to fulfill the pending requests with minimal transmissions exploiting multicasting and coding opportunities among the pending requests.\\
\indent We study the performance of this system in terms of queuing delays for the naive multicasting and several coded multicasting schemes proposed in the literature. We also provide approximate expressions for the mean queuing delay for these models and establish their effectiveness with simulations.
\end{abstract}

\begin{IEEEkeywords}
Caching system, coded multicast, queuing theory, wireless downlink.
\end{IEEEkeywords}
\section{Introduction}
Future wireless networks (e.g., 5G) are expected to deliver high volumes of data (\cite{CVNI}). Video on demand accounts for a major fraction of the network traffic. The current network architecture cannot scale cost effectively to meet the exploding demands. However, in recent years storage has become inexpensive. Also, a small number of contents constitute a major proportion of the traffic \cite{itube}. These two factors suggest the use of caching content close to the users as one of the potential solutions for meeting user demands. This is the topic of discussion in this paper.\\
\indent There are two main approaches of caching studies currently considered in the literature. In the first approach (conventional) (\cite{podlipnig,Martina2016}), the focus is on considering the eviction policies at individual caches, such as First-In-First-Out (FIFO), Least-Frequently-Used (LFU), Least-Recently-Used (LRU), Time-To-Live (TTL) etc. where maximizing the hit probability in a local cache  is the parameter of primary interest. In \cite{Fofack2012}, a hierarchical network based on TTL caches is analyzed. The general cache network is studied in \cite{Fofack2014}. In \cite{Martina2016}, the performance of the different caching policies under independent reference model (IRM) traffic and renewal traffic is analyzed. Another group of the works considers caching in a wireless heterogeneous network. The works in  \cite{Poularakis2014,Cui2017,Yang2016} study  caches in small cell networks. Some works (\cite{Ji2015,Zhang2016}) consider caching in the context of Device-Device (D2D) communication. \\
\indent The second, more recent approach (\cite{Maddah-Ali2014,Ji2016}) considers static caches sharing a common link with a server. These systems have two different phases: content placement  and coded delivery. In the content placement phase caches are populated (usually under low network activity) with files. In the content delivery phase, requests from all the nodes are received by the server and the delivery is performed using \textit{coded multicast}. This has been shown to reduce the total file transmission rate from the base station substantially as against the above conventional schemes. In \cite{Maddah-Ali2014,Yu2017a}, an information theoretic approach is taken where the minimum rate required to fulfill requests from all the users is studied. The work in  \cite{Ji2016} extends similar results to D2D communication. \cite{Karamchandani} studies coded caching in a system with two layers of caches. In \cite{Pedarsani2016}, an online coded scheme is presented. These schemes have been widely studied under uniform popularity traffic and have been further extended to general random request case also (\cite{Ji2017,Niesen2017}).\\
\indent In the above coded caching  works, an important aspect of queuing at the server  has been ignored. The queuing delay can be the dominant component of the overall delay experienced by a user in a content delivery network. Work in~\cite{Rezaei2016a} addresses these issues. These authors propose a few queuing models for the cache aided coded multicasting schemes.

\indent A queue with multicasting and network coding is also studied in (\cite{Moghadam,Hou}) in a different setting where there is no \textit{finite} library of files to download from and each arriving packet is different and must be received by each receiver. \\
\indent As in \cite{Rezaei2016a} we also consider queuing delays at the server. Our major contributions are as follows:
	\begin{enumerate}
	\item We consider a new type of queue called the \textit{multicast queue} in which new requests for a file are \textit{merged} with the ones already in the queue. \textit{All} pending requests for a file get served with one transmission of a file. This exploits the broadcast nature of the wireless channel more effectively than in \cite{Rezaei2016a} and reflects the practical scenario more realistically. An immediate impact of this model is that, unlike in \cite{Rezaei2016a}, our queue at the server is \textit{always} stable for any request arrival rate from the users for any finite number of files.  Furthermore, the queue is quite different from the multicast queue studied  in \cite{Moghadam,Hou} because of the difference in traffic model considered. 
	\item We show existence and uniqueness of stationary distribution for the multicast queue. Next, we develop an approximate expression for the mean queuing delay for this queue. We also consider the case when there are errors in transmission. 
	\item Next, we combine our multicast queue with coded delivery schemes such as in \cite{Rezaei2016a} to further reduce the queuing delay. We show that our schemes are stable for all request arrival rates due to merging. We prove stationarity of this system and provide an approximate mean delay expression. We also show that our schemes provide lower mean delays than the schemes in \cite{Rezaei2016a} except at low request rates. At high request rates, with the LRU caches at the users, the multicast queue without coded delivery performs better than several schemes with coded delivery.
\end{enumerate}	 

Rest of the paper is organized as follows. Section \ref{sec:system_model} describes the system model. In  Section \ref{sec:perf_analysis}, we  derive an approximate expression for mean delay of the multicast queue. We use this to analyze a basic system with LRU caches and multicast queue at the server. We extend the analysis to the case where transmissions are not error free, perhaps due to fading. We also study the multicast queue with coded transmission in Section \ref{sec:coded_schemes}. We provide approximate mean delay for this system. In Section \ref{sec:compare_caching_schemes}, we compare the performance of the multicast queue with the coded multicast queue. 
Section \ref{sec:conclusion} concludes the paper. 
\section{System Model} \label{sec:system_model}
We consider a network with one server and multiple users (Figure \ref{fig:system_model}). The users share a common link with the server. This may model the downlink in a cellular network. Each user is equipped with a cache. The system has a total of $M <\infty$ files. The complete library of files is accessible to the server. The file library is denoted by $\mathcal{M} = \{1,2 \ldots M\}$. File $i$ is of size $F_i$ bits. The set of users is denoted by $\mathcal{L} = \{1,2 \ldots L\}$. The request process at each user is assumed to be IRM, i.e., request process for file $i$ at the user $j$ is a Poisson process with rate $\lambda_{ij}$ and is independent of the other request processes from it and all other users. Thus, user $j$ requests for file $i$  with probability $p_{ij} = \lambda_{ij}/\sum_i \lambda_{ij}$. If the popularity profile is Zipf with exponent $\alpha$, then $p_{ij} \propto i^{-\alpha}$. A user forwards its request to the server if the file  requested by it is not fully available in its cache.\\
\indent In the following, we will consider different schemes at the server for fulfilling requests from multiple users. The simplest and a well studied system is where all requests from different users are stored in a request queue at the server. The server transmits the requested files in first-in-first-out (FIFO) fashion over the shared link. We assume that the rate on the shared link is $R$ bits/second without errors (generalization to a channel with errors will also be considered). We call this system a \textit{FIFO} queue.\\
\begin{figure}[t!]
\centering
\includegraphics[scale=0.3]
{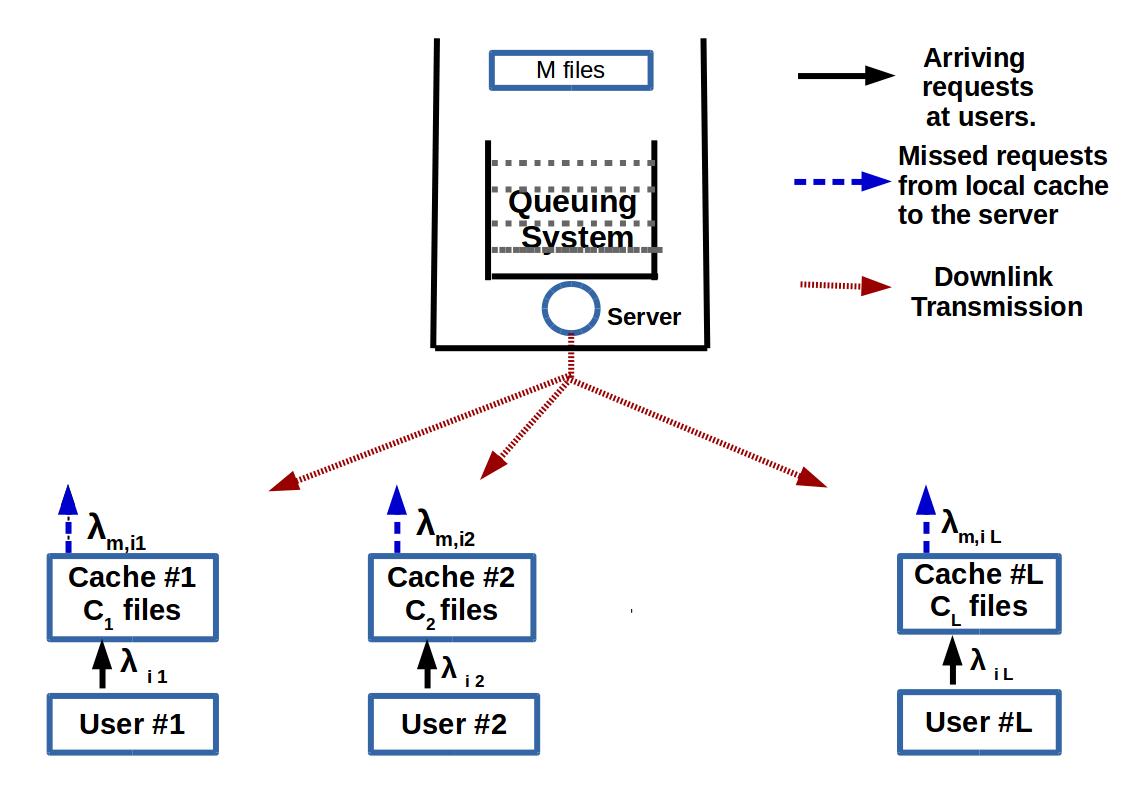}
\caption{System model has $L$ users with caches sharing a common link with one server with access to $M$ files and a queuing system with one or more queues at the server. User $j$ requests file $i$ according to a Poisson process of rate $\lambda_{ij}$.}
\label{fig:system_model}
\end{figure}
\indent Next, we consider the case where multicast nature of the channel is exploited: when the server transmits a file, all the users who have a pending request for that file receive the file and the corresponding requests are removed from the queue. We will describe this queue in more detail in Section \ref{sec:multicast_queue}. This queue has not been modeled previously. We will call this \textit{multicast} queue. We will see that this leads to significant improvement in performance over the FIFO queue. \\
\indent It has been demonstrated recently (\cite{Maddah-Ali2014}) that with cache enabled users and \textit{coded transmissions}, the multicast nature of the shared channel can be further exploited: it is possible to serve multiple users requesting different files simultaneously. Users retrieve the required files from the transmission from the server and the \textit{contents} of their own caches. We provide more details in Section \ref{sec:coded_schemes}.\\
\indent In the following, we theoretically study each of the above schemes and compare their performance.  
\section{Performance Analysis: FIFO and multicast queue}\label{sec:perf_analysis}
In Section \ref{subsec:fifo_queue}, we analyze the FIFO queue. Section \ref{sec:multicast_queue} describes the multicast queue in more detail and provides its performance analysis. Section \ref{sec:coded_schemes} presents several coded caching schemes and studies their queuing performance. 
\subsection{FIFO Queue}\label{subsec:fifo_queue}
We study the system without caches at the users. The transmission time $s_i$ for file $i$ is $F_i/R$ sec. Then, the request queue is an M/G/1 queue with arrival rate $\overline{\lambda} = \sum_{ij} \lambda_{ij}$ and i.i.d. service time $S$ with probability
\begin{equation}
\mathbb{P}[S = s_i] = \frac{\sum_{j} \lambda_{ij}}{\overline{\lambda}}.
\end{equation}
The queue is stable if $\rho \overset{\Delta}{=} \overline{\lambda} \mathbb{E}[S] < 1$. Under this condition, the queue will have a unique stationary distribution and the mean delay under stationarity is (\cite{Asmussen})
\begin{equation}
\mathbb{E}[D] = \frac{\rho}{1-\rho}  \left[ \frac{\mathbb{E}[S]}{2} + \frac{Var[S]}{2\mathbb{E}[S]} \right].
\end{equation}
If $\rho \geq 1$, the queue is unstable. It does not have a stationary distribution and the queue length will tend to infinity. \\
\indent Below, we study the multicast queue with caches at the users and also when transmission from the server has errors perhaps, because of fading channels. The same ideas can be used here to study the FIFO queue with caches and fading channel.

\subsection{Multicast Queue without Caches}
\label{sec:multicast_queue}
In this queue, we exploit the broadcast nature of the link in which all users with the pending request for a given file are serviced simultaneously. For simplicity, we first study the system when there are no caches at the users. The system with caches will be studied in Section \ref{sec:lru_multicast}. \\
\indent For this queue, the forwarded file request by a user is described by tuple $(i,j)$, where $i$ is the index of the file and $j$ is the user making this request. A request queued at the server is described by $(i,\mathbb{L}_i)$, where $i$ is the file index and $\mathbb{L}_i$ is the set of users who have requested the file $i$ and not yet  been served. When a new request for file $i$ arrives at the queue, the request is merged with its corresponding entry  in the queue if it already exists, i.e, the user list $\mathbb{L}_i$ is updated with the new users who have requested the file. If file $i$ is being served at that time, it is considered as a new request and is added to the tail of the queue. This is because the request has come from a user when part of the file has already been transmitted and hence the requesting user will not receive the full file from that transmission (we assume that a user does not receive a file being transmitted unless it has generated a request for this file).  
If at the time of arrival of request for file $i$ at the server, its request is not already in the queue, the new request is appended at the tail of the queue. The queue of requests is serviced one after the other from head to tail. The transmission of the file in the head of the queue serves all the users whose requests for that file have been merged.\\
\indent We study this queue in considerable detail as it is a basic natural model in wireless networks. Even though this queue has been exploited in previous scenarios, a detailed study of the same is missing in literature. By first studying this system, we isolate the benefits of multicasting alone, which does not require the extra overhead in terms of initial content placement and coding at the delivering time and then decoding at the receivers. Next, by studying the coded multicasting scheme, we compare the gains obtained via coded caching. Furthermore, the analysis used for the multicast queue will be useful in the analysis of the coded multicasting system as well.\\
\indent  We say that a request is of type $1$  if upon arrival there was no candidate for merger already in the queue. If the request was merged with an already existing request in the queue, we say it is of type $2$.\\
\subsubsection{Stationarity}
\indent We analyze this system and derive an approximate expression for the mean delay under stationarity. The maximum length of this queue is $M$ for all $\lambda_{ij}$ (because of merging of requests of each file).  Hence the queue is always stable. \\
\indent We consider the state $X_t$ of this system by associating a vector of tuples with each position in the queue at time $t$: $X_t = ((i_1,\mathbb{L}_{i_1}),(i_2,\mathbb{L}_{i_2}),\ldots,(i_n,\mathbb{L}_{i_n}))$ when $n$ is the queue length (including the file being served) and $i_k$ is the file index requested at position $k$ in the queue. Let $D_k$ denote the queuing delay of the $k^{th}$ arrival to the queue. We start with the following proposition. 
\begin{prop}\label{Th:BC_Prop}
For the multicast queue, $\{D_k\}$ is an aperiodic, regenerative process with finite mean regeneration interval and hence has an unique stationary distribution.  Also, starting from any initial distribution, $D_k$ converges in total variation to the stationary distribution.
\end{prop}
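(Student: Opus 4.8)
The plan is to exhibit a regeneration structure for $\{D_k\}$ by taking as regeneration epochs the indices of those arrivals that find the system \emph{empty} (no file in service and none waiting), and then to invoke the classical limit theorem for aperiodic regenerative processes with finite mean cycle length (as in \cite{Asmussen}). The reason such epochs regenerate is that, although the raw queue-content process is not Markov because the service times $s_i=F_i/R$ are deterministic, at an empty-finding epoch the entire physical state — queue contents together with the elapsed service time of the head — is reset to the trivial empty configuration. Combined with the memorylessness of the superposed Poisson arrival stream of rate $\overline{\lambda}$, and the fact that successive arrivals carry i.i.d. marks $(i,j)$ with $\P[\text{file }i]=\sum_j\lambda_{ij}/\overline{\lambda}$ and hence i.i.d. service requirements, the post-epoch evolution is a probabilistic replica of the evolution from any other such epoch and is independent of the pre-epoch history. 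I would also record that the delay $D_k$ of each customer is measurable with respect to the data of the cycle containing it: once a request for file $i$ arrives it is either appended at the tail or merged with a waiting copy, and in either case its wait equals the residual service of the head plus the deterministic service times of the entries strictly ahead of it; later arrivals for new files enter behind it and mergers never alter any $s_i$, so $D_k$ is fixed at its arrival instant by the configuration built up since the last empty epoch.

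Next I would bound the mean cycle length. The key input is the already-established fact that the queue holds at most $M$ entries, so the total unfinished work never exceeds $\tau := \sum_{i=1}^{M} s_i<\infty$. Fix a busy period started by an empty-finding arrival and partition time into consecutive windows of length $\tau$; if any window contains no arrival then, since at most $\tau$ units of work remain and none is added, the server drains the queue to empty within that window and the busy period ends. The windows are independent by the Poisson property, each being arrival-free with probability $p_0:=e^{-\overline{\lambda}\tau}>0$, so the busy-period length $B$ satisfies $\P[B>j\tau]\le(1-p_0)^{j}$; this geometric tail gives $\E[B]<\infty$. Converting to a count of arrivals, the number of arrivals in a cycle equals one plus the arrivals during $B$, and since $B$ is a stopping time of the driving point process the Poisson/Wald identity yields $\E[\#\text{arrivals per cycle}]=1+\overline{\lambda}\,\E[B]<\infty$, i.e. the mean regeneration interval of $\{D_k\}$ is finite.

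Finally, for aperiodicity I would argue at the level of the integer-valued cycle length: a cycle of length exactly one occurs whenever the single inter-arrival time following an empty-finding arrival exceeds the service time of the requested file, an event of probability at least $\sum_i (\sum_j\lambda_{ij}/\overline{\lambda})\,e^{-\overline{\lambda}s_i}>0$. Hence the cycle-length distribution places positive mass on $1$, the greatest common divisor of its support is $1$, and the regenerative process is aperiodic. With an i.i.d. cycle decomposition, a finite mean cycle length, and aperiodicity in hand, the regenerative limit theorem delivers a unique stationary distribution for $\{D_k\}$ together with convergence in total variation from every initial distribution.

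The step I expect to be the main obstacle is the rigorous verification of the regeneration itself rather than any of the estimates: because deterministic service times prevent the queue-content vector from being Markov, one must augment the state with the residual service of the job in service, confirm that this augmented state is genuinely trivial at empty-finding epochs, and check that $\{D_k\}$ — a functional of this augmented process — really decomposes into i.i.d. cycles. The finiteness and aperiodicity claims, by contrast, follow cleanly from the uniform bound $M$ on the queue length through the no-arrival-window argument above.
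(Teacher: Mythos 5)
Your proposal is correct, and it identifies the same regeneration epochs as the paper (arrivals that find the system empty, equivalently the first arrivals after departure epochs leaving an empty queue); but the supporting machinery is genuinely different. The paper first embeds a Markov chain: it takes $Y_n$ to be the queue state just after the $n$-th departure, notes that by the IRM assumption $\{Y_n\}$ is a finite-state, irreducible DTMC, proves aperiodicity via $P(Y_{n+1}=0\mid Y_n=0)>0$, and then transfers positive recurrence and aperiodicity from $\{Y_n\}$ to the regeneration cycles of $\{D_k\}$ — though this last transfer is only asserted (``the finiteness of the mean holds'', ``can be shown to be aperiodic, as for $\{Y_k\}$''). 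You dispense with the embedded chain entirely and verify the hypotheses of the regenerative limit theorem by hand: i.i.d.\ cycles from the memorylessness of the Poisson stream and the triviality of the augmented (queue content plus residual service) state at empty-finding epochs, finite mean cycle length via the bounded-workload window argument (geometric tail for the busy period, then Wald to convert time to arrival counts), and aperiodicity via positive probability of a cycle of length one. What your route buys is self-containedness and explicit quantitative control — precisely the two steps the paper waves through are the ones you prove; what the paper's route buys is brevity and reuse, since the same DTMC $\{Y_n\}$ is invoked again for the stationarity of $\{X_t\}$ and in the cached/coded variants later in the paper. One shared harmless imprecision: the queue can in fact hold $M+1$ entries (the file in service may have a duplicate waiting entry), so your workload bound $\tau=\sum_{i=1}^M s_i$ should be $\sum_{i=1}^M s_i + \max_i s_i$; the window argument is unaffected apart from this constant.
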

\begin{proof}
\indent Let the state of the queue just after the $n^{th}$ departure be $Y_n$. By IRM assumption, $\{Y_n\}$ is a finite state, irreducible discrete time Markov chain (DTMC) (\cite{Asmussen}). We claim that this is also aperiodic. To see this consider the state $Y_n=0$, i.e., when the system is empty just after the $n^{th}$ departure. The probability, $P(Y_{n+1}=0|Y_n=0)>0$ as there is a positive probability of just one arrival before the next departure. Thus the DTMC has a unique stationary distribution. \\
\indent The arrival epochs to the queue just after the epochs when $Y_n$ is $0$ are also the regeneration epochs for $\{D_k\}$. The regeneration lengths for $\{D_k\}$ also have finite mean length (although  it is more than that of $Y_k$ due to the merging effect) and can be shown to be aperiodic, as for $\{Y_k\}$. Thus, $\{D_k\}$ has a unique stationary distribution and starting from any initial conditions, $\{D_k\}$ converges to the stationary distribution in total variation. \\
\end{proof}   
\indent  Using above methods, we can show that $\{X_t\}$ also has a unique stationary distribution and starting from any initial distribution, the process converges in total variation to the stationary distribution.\\
\indent  The queue length is upper bounded by $M$ and the waiting times are upper bounded by $\Sigma_{i=1}^{M} s_i$. Thus, unlike for the FIFO queue, the multicast queue that is studied here has the distinct advantage of being always stable and has bounded delays. \\
\indent The dimension of the Markov chain $\{Y_n\}$ can be very large even for modest $M$ and $L$. Computing its stationary distribution can be prohibitive. Computing mean delay under stationarity is even harder. Next, we provide an approximation for the mean queuing delay using M/G/1 queue which is easy to compute.  \\
\subsubsection{M/G/1 approximation:}
Let $\overline{D_i}$ denote a random variable with the stationary distribution of queuing delay for requests of file $i$ and type $1$ across all users. We will use the approximation that $\mathbb{E}[\overline{D}_i]$ are same for all $i$ and we will denote it by $\mathbb{E}[\overline{D}]$ (we have checked via simulations that this  is a good approximation). Since all the requests for file $i$ that arrive during the delay of type 1 requests are merged and are considered as one single merged request for transmission, the \textit{effective} arrival rate of file $i$ into the queue is  
\begin{equation}\label{eq:effective_arr_rate}
\lambda'_i = \frac{\lambda_i}{1+\lambda_i \mathbb{E}[\overline{D}]},
\end{equation}
where $\lambda_i = \sum_j \lambda_{ij}$. The utilization factor for this queue is
\begin{equation}\label{eq:rho}
\rho  = \mathbb{E}[S] \sum_{i=1}^M{\lambda'_i}.
\end{equation}
We use the $M/G/1$ queue mean delay expression \cite{Asmussen} to approximate $\mathbb{E}[\overline{D}]$ as 
\begin{equation}\label{eq:PK}
\frac{\rho}{1-\rho} \left[ \frac{\mathbb{E}[S]}{2} + \frac{Var[S]}{2\mathbb{E}[S]} \right].
\end{equation}

\begin{figure}[t!]
\centering
\includegraphics[scale=0.4]
{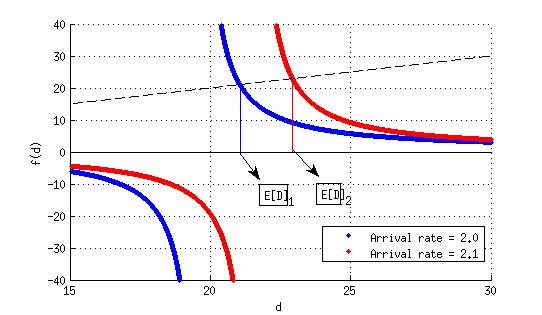}
\caption{Solution to M/G/1 approximation. $M = 100$, $s_i = 1s$. Zipf exponent 1. Plot shows $f(d)$ for total rate $= 2$ (blue) and $2.1$ (red). Corresponding fixed points are also indicated.}
\label{fig:fixed_point_graph}
\end{figure}
We use Eq (\ref{eq:effective_arr_rate}), (\ref{eq:rho}) and (\ref{eq:PK}) to solve for $\mathbb{E}[\overline{D}]$. From these equations, we can write (taking $\mathbb{E}[\overline{D}] = d$ for notational simplicity) the fixed point equation $f(d) = d$, where 
\begin{equation}
f(d) = \frac{\mathbb{E}[S] \sum_i \frac{\lambda_i}{1+\lambda_i d}}{ 1-\mathbb{E}[S] \sum_i \frac{\lambda_i}{1+\lambda_i d}}.
\end{equation}
From this equation, we can conclude the following:
\begin{enumerate}
\item $f$  is a continuous function of $d$ for all $d>d_0$ where $d_0$ satisfies $\mathbb{E}[S] \sum_i \frac{\lambda_i}{1+\lambda_i d_0} = 1$. For $d>d_0$, $f$ is strictly decreasing and as $d \rightarrow \infty $, $f(d) \rightarrow 0$. Also as $d \rightarrow d_0^+$, $f(d) \rightarrow \infty$. See Figure \ref{fig:fixed_point_graph}. The figure shows $f(d)$ for all $d$. The region where $d<d_0$ is of no interest to us. For any value of $\lambda_i$, $\mathbb{E}[S]$ and $Var[S]$, $f(d) = d$ has a unique solution in the region $d>d_0$. This solution is positive. This is the solution we consider for $\mathbb{E}[\overline{D}]$ for our system. At this $d$, $\rho < 1$.
\item If we increase any $\lambda_i$, keeping all the other parameters fixed in Eq (\ref{eq:rho}), $\rho$ increases and thus $f(d)$ increases for a fixed $d$ (see Figure \ref{fig:fixed_point_graph}). Therefore, the resulting fixed point $d=\mathbb{E}[\overline{D}]$ increases. If all $\lambda_i \rightarrow \infty$, then the fixed $d$, $f(d) \rightarrow \frac{\mathbb{E}[S] M/d}{1-\mathbb{E}[S] M/d} < \infty $. Thus, in the limit,  the fixed point will be $d$ at which $d = \frac{\mathbb{E}[S] M}{d-\mathbb{E}[S] M}$. This has the solution $d = \mathbb{E}[S] M \left(1+\sqrt{1+4/(M \mathbb{E}[S])}\right)/2$. For large $M$, $d \sim \mathbb{E}[S] M$. This will be an upper bound for all $\lambda_i$s.
\end{enumerate}
\indent Let $\widetilde{D}_i$ denote a random variable corresponding to queuing delay of file $i$ (includes both type $1$ and type $2$) under stationarity. We approximate $\mathbb{E}[\widetilde{D}_i]$ by 
\begin{equation}\label{eq:avg_delay}
\frac{\lambda'_i \mathbb{E}[D] + (\lambda_i - \lambda'_i) \mathbb{E}[D]/2}{\lambda_i}.
\end{equation}
This reflects the fact that $\lambda'_i/\lambda_i$ fraction of requests (type $1$) experience a delay of $\mathbb{E}[D]$. The requests that get merged with already existing ones in the queue (type $2$) experience a mean delay of $\mathbb{E}[D]/2$ (approximately).
 
\begin{figure}[t!]
\centering
\includegraphics[scale=0.115]
{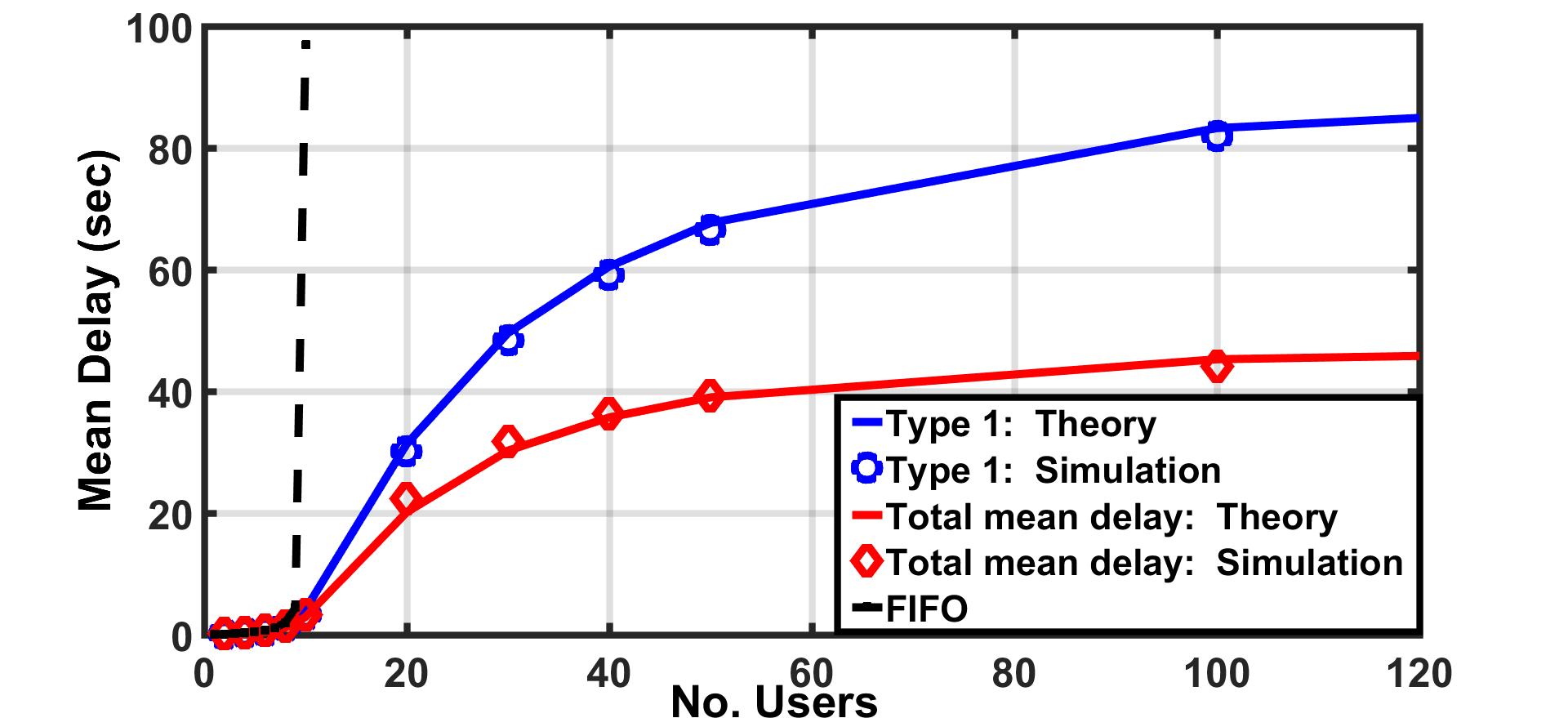}
\caption{Validity of M/G/1 approximation for the multicast queue. $M = 100$, $s_i = 1s$. user request rate = 0.1 request/sec and Zipf exponent 1.}
\label{fig:mg1_approximation}
\end{figure}

In Figure \ref{fig:mg1_approximation}, we compare the above approximation of the mean delay $\mathbb{E}[D]$ of the multicast queue with simulations. In this example, $M = 100$, $s_i = 1s$ for all files, popularity profile is Zipf with exponent $\alpha = 0.8$, $\lambda_{j} = 0.1$  files/second for all $j$. We see a good match. \\
\indent  In this figure, we also plot the mean delays for FIFO queue for comparison. FIFO queue becomes unstable as the number of users increases. However, the multicast queue always remains stable and has quite a low mean delay even at large arrival rates.\\
\indent  For the delay of $10s$, the maximum request arrival rate supported are approximately $9$ and $13.5$ files/sec for FIFO and the multicast queue respectively. For the mean delay of $15s$, the same are approximately $9.1$ and $16.2$ request/sec.

 \subsection{Multicast queue with LRU caches}
\label{sec:lru_multicast}
\indent Now, we extend our analysis to the multicast system with caches. All users use LRU at their caches. This is the most studied scheme for file replacement and has good performance (\cite{Martina2016}). Our analysis easily extends to other eviction policies also. The requests generated by a user that are not met at the local cache are forwarded to the server. At the server the requests are queued and served as described in Section   \ref{sec:multicast_queue}.
\subsubsection{Stationarity}
Consider the system state $X_t=\{X_t^0,X_t^1,...,X_t^L\}$, where, $X_t^0$ is the state of the multicast queue and $X_t^j,\ j=1,...,L$ are vectors with positions of each file in the cache $j,\ for\ j=1,...,L$. If a file is not there in the cache then the position corresponding to the file is set to 0. Let $D_k$ be the delay of the $k^{th}$ request arriving at the multicast queue.
\begin{prop}
The processes $\{X_t\}$ and $\{D_k\}$ have unique stationary distributions. Also, starting from any initial conditions, these processes converge in total variation to their stationary distributions.
\end{prop}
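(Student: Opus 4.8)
The plan is to reproduce the regenerative argument of Proposition~\ref{Th:BC_Prop}, with the state enlarged to carry the cache contents. First I would pass to the embedded discrete-time chain $\{Y_n\}$ that records the full state $X_t=\{X_t^0,\dots,X_t^L\}$ sampled just after the $n^{th}$ departure from the multicast queue. This is a Markov chain: arrivals are Poisson (IRM) and hence memoryless, the service time of the head file is a deterministic function of its index, and the LRU updates (reordering on a hit, insertion-with-eviction on a served miss) are deterministic functions of the request sequence. Its state space is finite, since the queue holds at most $M$ merged entries (as in Section~\ref{sec:multicast_queue}) and each of the $L$ caches has only finitely many ordered configurations drawn from $\mathcal{M}$.

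Next I would single out a candidate regeneration state $Y^\star=(\varnothing,c^\star)$, an empty queue together with one fixed cache configuration $c^\star$, and show it is reachable from every state with positive probability. Reachability has two ingredients. (i) Over any bounded interval the probability that no request arrives is positive, so the backlog can be drained to the empty queue with the caches left unchanged. (ii) LRU is \emph{self-resetting}: from the empty queue, requesting a prescribed finite sequence of files at each user (each request being forwarded and served in finite time) installs any desired ordered cache content, so $c^\star$ can be reached, after which a second quiet interval drains the induced backlog. Because the chain is finite and $Y^\star$ is reachable from all states, $Y^\star$ lies in the unique recurrent class and has finite mean return time; this already forces uniqueness of the stationary distribution of $\{Y_n\}$.

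The delicate point is aperiodicity, and this is where I expect the main obstacle. Unlike the cache-free chain of Proposition~\ref{Th:BC_Prop}, here there is \emph{no} one-step self-loop at $Y^\star$: every departure serves a file that was, by definition, a miss for its requesters, so its delivery inserts a new file into at least one cache and strictly perturbs the configuration. I would recover period $1$ by exploiting the freedom that cache \emph{hits}, which reorder a cache but generate no service and hence no departure, can be interleaved between consecutive departures. Using this freedom I would exhibit, from $Y^\star$, positive-probability return paths of two coprime lengths (e.g.\ lengths $2$ and $3$), which pins the period of the recurrent class to $1$. Together with finiteness and the single recurrent class, this gives a unique stationary distribution for $\{Y_n\}$ with convergence in total variation.

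Finally I would lift these conclusions to the two processes in the statement, exactly as in Proposition~\ref{Th:BC_Prop}. The continuous-time instants at which $X_t=(\varnothing,c^\star)$ are genuine regeneration epochs for $\{X_t\}$: at such an instant the queue is empty, so there is no residual service to remember and the memoryless Poisson inputs make the post-epoch evolution an i.i.d.\ copy; these epochs recur by the positive recurrence of $Y^\star$, the cycle length has finite mean (bounded backlog plus an a.s.\ finite idle period) and is non-lattice (it contains an exponentially distributed idle time), so $\{X_t\}$ admits a unique stationary distribution and converges to it in total variation \cite{Asmussen}. For $\{D_k\}$, the queue-arrivals that start a cycle from $Y^\star$ are the regeneration epochs; the number of arrivals per cycle has finite mean and, by the same interleaving argument, aperiodic span, yielding the unique stationary distribution and total-variation convergence of $\{D_k\}$.
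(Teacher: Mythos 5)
Your proof is correct, and it shares the paper's overall architecture: embed a Markov chain $\{Y_n\}$ at departure epochs, get a unique stationary law there, then lift by regeneration to $\{X_t\}$ and to $\{D_k\}$. But it differs from the paper's proof in two substantive ways, both to your credit. First, the paper simply asserts that $\{Y_n\}$ is finite, irreducible and aperiodic ``due to the IRM assumption''; you correctly observe that the self-loop argument of Proposition~\ref{Th:BC_Prop} (positive probability of returning to the empty state in one step) genuinely fails here, since every departure is a served miss and therefore necessarily inserts a file into at least one cache, so no state of $\{Y_n\}$ has a one-step self-loop. Your repair---return paths to $(\varnothing,c^\star)$ of coprime lengths, obtained by interleaving zero-departure cache hits with miss-driven departures (e.g., for $C=2$: miss, hit, miss, hit returns in $2$ steps, while miss, miss, miss returns in $3$)---is exactly what is needed, and it also delivers the aperiodic span of the arrival-counted cycles required for total-variation convergence of the discrete-time process $\{D_k\}$. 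Second, for $\{D_k\}$ the paper augments the state with the residual service time $R_t$, invokes PASTA, and writes $D_k$ as a bounded deterministic function of the Palm state $(Y'_k,R'_k)$, whereas you treat $\{D_k\}$ directly as a regenerative sequence indexed by server arrivals, as in Proposition~\ref{Th:BC_Prop}; both routes are valid. One technical slip worth fixing: for total-variation (as opposed to weak) convergence of the continuous-time regenerative process $\{X_t\}$, a non-lattice cycle length is not sufficient in general---one needs the cycle-length distribution to be spread out. However, the very fact you invoke (each cycle contains an exponentially distributed idle period, independent of the rest of the cycle by memorylessness of the Poisson inputs) makes the cycle length absolutely continuous, hence spread out, so your conclusion stands once the correct condition is cited.
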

\begin{proof}

Denote by $Y_n$ the state $X_t$, just after the $n^{th}$ departure from the multicast queue. Due to IRM assumption, $\{Y_n\}$ is a finite state, aperiodic, irreducible Markov chain. Thus, $Y_n$ has a unique stationary distribution.\\
\indent Fix a state of $Y_n$ where the multicast queue is empty. The revisits to this state are regeneration epochs. Thus $\{Y_n\}$ has finite mean regeneration times. Now, consider the continuous time process $\{(X_t, R_t)\}$, where $R_t$ is the residual transmission time of the file being transmitted at the server at time $t$ This also has regeneration epochs at the regeneration epochs of $Y_n$ mentioned above and the regeneration times of $\{(X_t, R_t)\}$ have a finite mean due to Poisson arrivals. Thus $\{(X_t, R_t)\}$ has a unique stationary distribution. \\
\indent Since arrival epochs to the system form a Poisson process (IRM), by PASTA (\cite{Asmussen}) the arrivals see the same stationary distribution of the system as of $\{(X_t, R_t)\}$. Now, consider the arrivals at the server. These are the miss requests from the caches. Let $(Y'_n,R'_n)$ be the process at these epochs. The Palm distribution (\cite{Asmussen}) of $\{(X_t, R_t)\}$ at these epochs is the stationary distribution seen by these epochs. The queuing delay $D_k$ seen by the $k^{th}$ arrival at the server is $D_k=f(Y_k',R'_k)$ where $f$  is a deterministic function bounded by $\sum_{i=1}^{M}{s_i}$. Thus $D_k$ has a stationary distribution.\\
\indent These results also imply that starting from any initial conditions, $\{X_t\}$ and $\{D_k\}$ converge to their stationary distributions in total variation. 
\end{proof}
\subsubsection{Approximation of delay at the multicast queue} 
Since, miss requests are coming from LRU caches, if the IRM requests for file $i$ to user $j$ arrive at rate $\lambda_{ij}$, the missed traffic for file $i$ from user $j$, is given by 
\begin{equation}\label{eq:total_arrival_rate}
\lambda_{m,ij} = \lambda_{ij} P_{m}(i,j) 
\end{equation}
where $P_{m}(i,j)$ is the probability of miss for file $i$ at the cache of user $j$. Using Che's approximation in \cite{Martina2016}, $P_{m}(i,j) = e^{-\lambda_{ij} T_C}$,  
where $T_C$ is the Che's constant. Also, the arrival process at the multicast queue can often be approximated by a Poisson process (\cite{Martina2016}). Thus the effective arrival process at the server as in (\ref{eq:effective_arr_rate}) can be approximated by a Poisson process with rates
\begin{equation}\label{eq:Eff_lambda_for_coding}
\lambda_i'=\frac{\sum_{j=1}^{L}{\lambda_{m,ij}}}{1+E[D]\sum_{j=1}^{L}{\lambda_{m,ij}}},
\end{equation}\\
\begin{figure}[t!]
\centering
\includegraphics[scale = 0.13]
{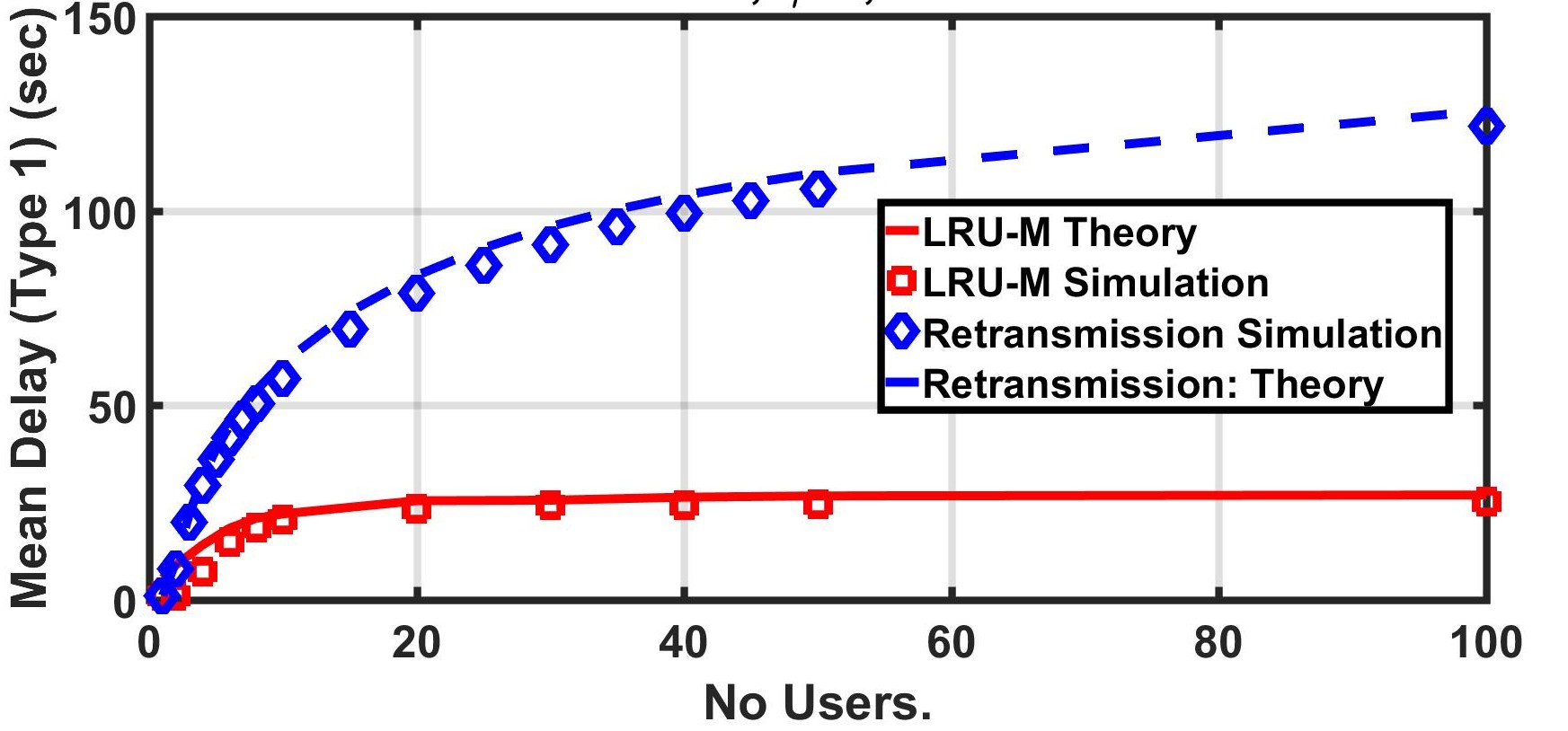}
\caption{Validity of theory for mean delay for the the multicast queue with transmission errors and LRU-multicast queue: $M=50$, $C=10$, $s_i = 1s$, error $1-r = 0.1$, user rate = 0.6 files/sec and Zipf exponent 1.}
\label{fig:Th_Vs_Sim_fading_lru-m}
\end{figure}
\indent Figure \ref{fig:Th_Vs_Sim_fading_lru-m} shows the comparison between theory and simulation. The parameters are $M=50$, $C=40$, $s_i = 1s$ for all $i$, arrival rate $= 0.6$ request/sec per user and file popularity distribution is Zipf with exponent equal to one. We have plotted the mean delay for the delays seen by all the requests coming at a user; if a request is met at the local cache, its delay is taken as zero. We see that they match well.\\
\indent If an eviction policy other than LRU is used, the only changes needed in the above approximation are in $P_m(i,j)$, which can be obtained from the literature (\cite{Martina2016}) for various policies. 
\subsection{Multicast queue with transmission errors}
\label{sec:fading_multicast}
Now, we consider a more realistic scenario where the channel from the server to the users experiences fading. This models a wireless cellular  network, where the Base Station (BS) is the server and the users are mobile customers. The fading processes for different users are  assumed independent of each other. The channel for a given user has block fading: the channel gain is same during the transmission of a file; it changes independently from one transmission to another with the same distribution. These are commonly made assumptions in literature \cite{Tse}.  \\
\indent We consider only the system without caches. We consider the multicast queue described in Sec \ref{sec:multicast_queue}. We assume that perfect knowledge of the channel gains from the BS to the mobile users is available at the BS before transmission. Also, the BS transmits at a constant power. \\
\indent We consider two schemes for transmission of a file: 
\begin{enumerate}
\item \textsc{Worst rate:} The server transmits at the rate corresponding to the user with the worst channel gain. After the transmission, all the intended users receive and decode the file transmitted without errors. 
\item \textsc{Retransmission, fixed rate:} There is an appropriately selected fixed rate $\overline{R}$ for transmission. The intended users whose current rate is greater than $\overline{R}$ will receive and decode correctly. All other users cannot decode correctly. The server retransmits the content to the remaining users till all the intended users receive and decode the content correctly. For this case, the BS does not need the channel gains; only a nack from the user not decoding correctly is sufficient. For selecting $\overline{R}$, we need the statistics of the fading channel.
\end{enumerate}
\begin{figure}[t]
\centering
\includegraphics[scale = 0.12]
{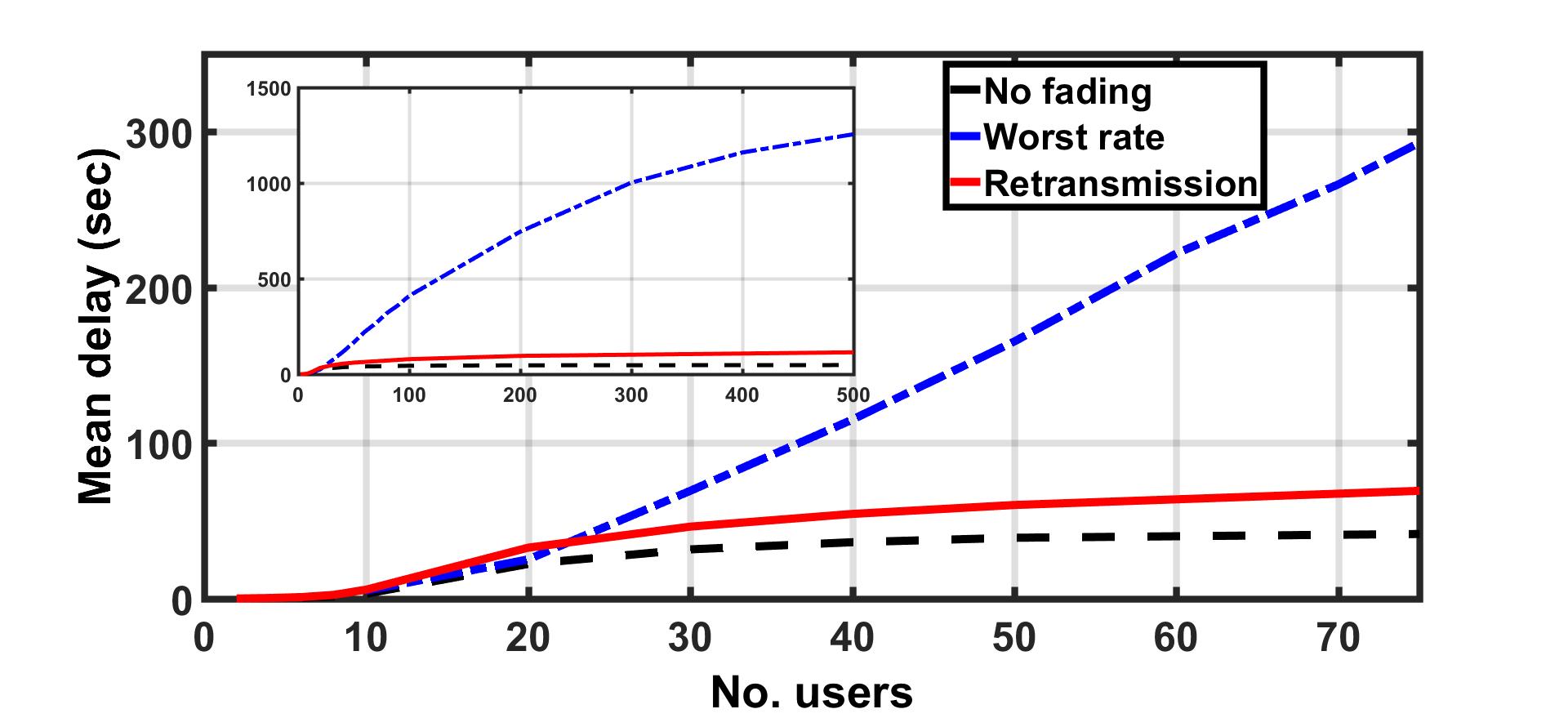}
\caption{Mean delay for the multicast queue with transmission errors.  $L=10$, user rate = $0.1$ request/sec, $M=100$, all file sizes with $\overline{R} = 1$ file/sec, $SNR = 10$, Bandwidth $= 10$ MHz, Zipf exponent $= 1$.}
\label{fig:multicast_fading_nocache_compare}
\end{figure}

We compare the performance of the two schemes by an example. The channel is additive white Gaussian noise channel. The system parameters are: $L=10$, user rate = $0.1$ request/sec, $M=100$, file sizes are same with $\overline{R} = 1$ file/sec, $SNR = 10dB$, $Bandwidth = 10 MHz$. Rayleigh fading is assumed with channel gain $H$ and $H^2 \sim exp(1)$. We plot mean delay of the queue in Figure \ref{fig:multicast_fading_nocache_compare}. For comparison, we also plot the mean delay when there is no fading. The retransmission scheme outperforms the worst user rate scheme substantially. Thus, in the following we study only the retransmission scheme. 
\subsubsection{M/G/1 approximation}
For the theory, we again use M/G/1 approximation of Sec \ref{sec:multicast_queue}. We need to compute the mean and variance of service times. For this, we need the distribution of the number of users merged who have requested the file being serviced. On the average a request of type $1$ for file $i$ from a user $j$ spends time equal to $E[D]$ in the multicast queue from the time it enters the queue till its service begins. The probability that any other user $k$ makes a request (type $2$) for the same file is $q_{ik} = 1 - exp(-\lambda_{ik}E[D])$. The request process from each user is assumed to be independent. From this, we can get the distribution of the number of users merged corresponding to a given request being serviced. Let the type 1 request corresponding to the file $i$ in service be from user $j$. The number of users is given by $1+\sum_{k \in \mathbb{L} \setminus j} X_k$, where $X_k$ has Bernoulli distribution with the parameter $q_{ik}$. Also, the probability that type 1 request was generated by user $j$ is $\lambda_{ij}/\sum_k \lambda_{ik}$. \\
\indent Let $r_j$ be the probability that user $j$ receives and decodes a transmission correctly. For the AWGN channel with transmit power $P$, channel gain $H$ and noise variance $\sigma^2$, $r_j= \mathbb{P}[log(1+HP/\sigma^2)/2 \geq \overline{R}]$. The number of transmissions required for successful reception is geometrically distributed with parameter $r_j$. If $\mathbb{L}_i$ denotes the set of users requesting the file $i$ that is being transmitted, the total number of transmissions $N_{t,i}$ required to serve all the users in $\mathbb{L}_i$ has the distribution
\begin{equation} 
P[N_{t,i} \leq n] =  \prod_{j\in\mathbb{L}_i}\ \   (\sum_{m=1}^{n} r_j(1-r_j)^{m-1}).
\end{equation}
From this and the distribution of number of users in $\mathbb{L}_i$, we can compute $\mathbb{E}[S]$ and $Var[S]$ in Eq (\ref{eq:PK}). The resulting set of equations can be solved numerically. \\
\indent In Figure \ref{fig:Th_Vs_Sim_fading_lru-m}, we compare $\mathbb{E}[D]$ computed theoretically with simulations. The parameters are $M=50$ files, all files are of the same size with transmission rate of each file = $1$ file/sec, error rate $(1-r_j) = 0.1$ for all $j$ and arrival rate = $0.6$ request/sec per user. Popularity profile of the files follows Zipf distribution with exponent 1. We see that the theory matches well with simulations. \\
\indent If the users also have caches, then the above analysis can be extended to this system by replacing the request rates $\lambda_{ij}$ by (\ref{eq:total_arrival_rate}) and (\ref{eq:Eff_lambda_for_coding}).
\section{Coded multicasting schemes with queueing}\label{sec:coded_schemes}
\label{sec:PCS_Merge}
In this section, we consider the system when users are equipped with caches and use coded multicasting for transmission. We assume all files have the same size ($F_i = F$ for all $i$). Each user is assumed to have a cache of size equal to $C$ files. These assumptions are being made because we are using the coding scheme in \cite{Maddah-Ali2014}, but can be relaxed.\\
\indent The caches at the users are populated with parts of the file as in \cite{Maddah-Ali2014}. Since, each user stores a certain part of all files, no request can be completely met locally. Hence, all requests are sent to the server. At the server, a separate queue for requests from each user is maintained. Within each user queue, requests corresponding to the same file are merged as in the multicast queue. During transmission, the head of the line requests of all the queues are considered for coding and the delivery scheme in \cite{Maddah-Ali2014} is used. Some of the queues can be empty. For such queues, a dummy request (for an arbitrary file) is assumed to be present.  Note that assuming dummy requests when some queues are empty is quite wasteful. In the next section, we will consider policies which take care of this. We call this scheme \textit{Partition Coded Scheme With Merging} (PCS-M). We will show in Sec \ref{sec:compare_caching_schemes} that this scheme outperforms the multicast queue substantially at low arrival rates. However, at high rates the probability that most of the users are merged in each file request for the multicast case is high and hence the multicasting gains take over. \\
\indent In the following, we study this system theoretically. 
\subsection{Stationarity}
Let the system state, $X_t=\{X_t^1, ..., X_t^L\}$ where $X_t^j$ is a vector representing  the index of each file in the queue of user $j$ at the BS. Let $D_k$ be the delay of the $k^{th}$ arrival to the BS.
\begin{prop}
For the system with caches and coding, the queuing delay process $\{D_k\}$  is an aperiodic, regenerative process with a finite mean regeneration interval and hence has a unique stationary distribution.
\end{prop}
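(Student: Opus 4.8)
The plan is to follow the same regenerative-process template used for the two previous propositions, now applied to the joint state of the $L$ coupled per-user queues. First I would let $Y_n$ denote the system state $X_t$ observed just after the $n^{th}$ coded transmission completes (the $n^{th}$ departure epoch at the BS). Because arrivals are Poisson and independent across files and users (IRM), and because merging caps each user queue $X_t^j$ at $M$ entries, the joint configuration $\{X_t^1,\ldots,X_t^L\}$ takes values in a finite set; hence $\{Y_n\}$ is a finite-state DTMC. The essential new feature relative to Section \ref{sec:multicast_queue} is that a single coded transmission simultaneously removes the head-of-line request from every nonempty queue, so the departure dynamics couple all $L$ queues; nevertheless the chain remains finite.

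Next I would verify irreducibility and aperiodicity. Let $0$ denote the state in which all $L$ queues are empty. From any state, a run of consecutive coded transmissions with no intervening arrival empties every queue after at most $M$ departures (each removes one head from each nonempty queue), and this event has positive probability because the inter-arrival gaps can exceed the total service time; thus $0$ is reachable from every state. Conversely, from $0$ any configuration can be built up by a suitable positive-probability sequence of arrivals interleaved with services, so all states communicate with $0$. For aperiodicity it suffices to exhibit a self-loop at $0$: starting empty, with positive probability exactly one request arrives, is served by one transmission, and no further request arrives during that service, returning the system to $0$, so $\P(Y_{n+1}=0\mid Y_n=0)>0$. A finite, irreducible, aperiodic DTMC has a unique stationary distribution.

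Then I would promote this to a regenerative statement. The arrival epochs that find the system empty are regeneration epochs; since $\{Y_n\}$ is positive recurrent the mean number of departures between successive visits to $0$ is finite, and because service times are bounded and arrivals are Poisson the corresponding continuous-time cycles have finite mean length. The delay $D_k$ of the $k^{th}$ arrival is a deterministic, uniformly bounded function of the state it sees and the residual service, so the cycles of $\{D_k\}$ between these epochs are i.i.d.; the expected number of arrivals per cycle is finite (Poisson arrivals over a finite-mean interval), and a cycle can consist of a single arrival with positive probability, so the regeneration is aperiodic. By the regenerative-process theorem this yields a unique stationary distribution for $\{D_k\}$ (and, as in Proposition \ref{Th:BC_Prop}, convergence in total variation from any initial condition).

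I expect the main obstacle to be the irreducibility check for the joint chain: because each coded transmission strips the heads of all nonempty queues at once and merging restricts the admissible per-queue configurations, one must argue carefully that every joint configuration is both reachable from $0$ and able to reach $0$. Once the finiteness of the state space and the boundedness of the service time (by $\sum_{i=1}^{M} s_i$) are in hand, the remainder is a routine transcription of the earlier proofs.
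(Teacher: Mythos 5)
Your proposal is correct and follows essentially the same route as the paper's proof: an embedded finite-state DTMC $\{Y_n\}$ at departure epochs, unique stationarity from irreducibility and aperiodicity, regeneration at arrivals finding the system empty, and Poisson arrivals plus bounded cycles giving a unique stationary distribution for $\{D_k\}$. You supply explicit irreducibility/aperiodicity and finite-mean-cycle arguments that the paper merely asserts, but the structure of the argument is the same.
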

\begin{proof}
Let $\{Y_k\}$ be the state of the system just after the $k^{th}$ departure from the BS. $\{Y_k\}$ is an irreducible, finite state, aperiodic DTMC. Thus it has an unique stationary distribution. Further, we note that the epochs where $Y_k=0$, i.e., all the queues are empty, are regeneration epochs for the process $\{Y_k\}$. The finiteness of the mean of the regeneration intervals holds. \\
\indent Now, since all the IRM arrivals at the users are forwarded to the BS, the forwarded requests also form a Poisson process. The system states seen by these arrivals also have  the regeneration epochs which see the system empty (the arrivals just after $Y_k = 0$, see empty system). Thus, $D_k$ have a unique stationary distribution.   
\end{proof}
\subsection{Approximation}
\indent To compute the mean queuing delay under stationarity, we consider each user queue separately. We use the M/G/1 approximation from Sec \ref{sec:multicast_queue}. The effective arrival rate for file $i$ into the queue for user $j$ is 
\begin{equation}
\lambda'_{ij} = \frac{\lambda_{ij}}{1+\lambda_{ij} \mathbb{E}[D_j]} \mbox{ for } j = 1,2, \ldots L \mbox{ and } i = 1,2, \ldots M
\end{equation}
where $\mathbb{E}[D_j]$ is the mean queuing delay for type 1 requests from user $j$. The service time is equal to the time it takes to transmit coded packets for all the $L$ users. The transmission rate of the channel is $1$ file/sec. In the scheme considered, the service times are deterministic and given by (\cite{Maddah-Ali2014})
\begin{equation}
s = \frac{L(1-C/M)}{1+CL/M}.
\end{equation}
We use these in Eq (\ref{eq:rho}) and Eq (\ref{eq:PK}) to compute $\mathbb{E}[D_j]$ (we will have $\mathbb{E}[D_j]$ in the place of $\mathbb{E}[D]$ in Eq (\ref{eq:PK})) for each queue).\\
\begin{figure}[t!]
\centering
\includegraphics[height=4.5cm, width=8.5cm]
{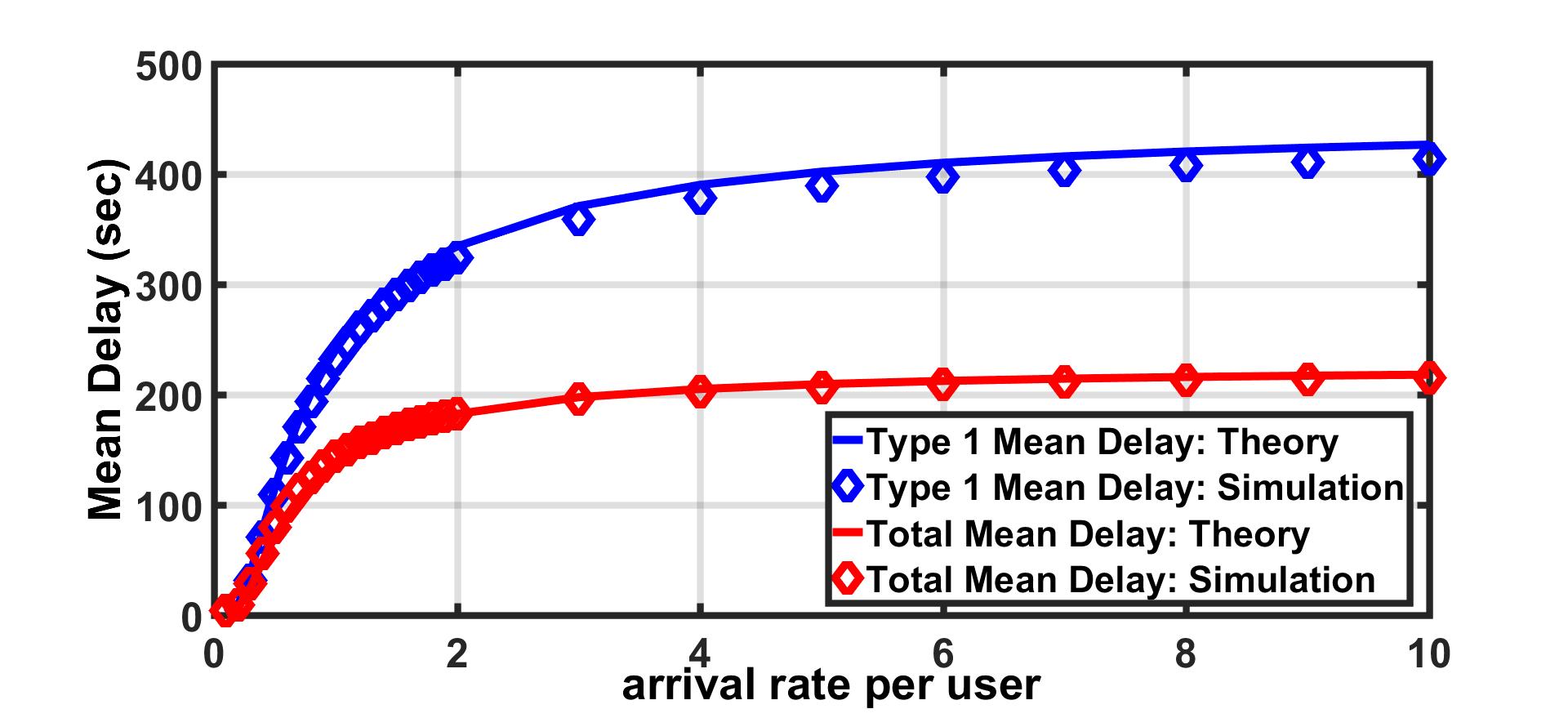}
\caption{M/G/1 approximation for the coded multicasting queue at user \#1. Transmission rate = 1 file/sec, $M = 100$, $C = 10$ and $L = 10$. Zipf Exponent 1.}
\label{fig:mg1_approximation_coded_multicast}
\vspace{-5mm}
\end{figure}
\indent Figure \ref{fig:mg1_approximation_coded_multicast} shows the mean queuing delay for user \#1 as the request rate is varied. The system parameters are: transmission rate = 1 file/sec, $M = 100$, $C = 10$ and $L = 10$. File popularity is Zipf distribution with exponent 1. We notice that M/G/1 approximation agrees well with the simulations. \\
\indent These results can be extended to the fading channel, as for the multicast queue in Section \ref{sec:fading_multicast}.
\section{Comparison of coded and uncoded caching schemes}
\label{sec:compare_caching_schemes}
In this section, we compare the mean delays of several schemes via simulations. These schemes include the above studied schemes, some from \cite{Rezaei2016a} and a few additional variants of the above schemes.
\begin{enumerate}
\item \textit{LRU with merging and multicast (LRU-M):} This scheme is described in Section \ref{sec:lru_multicast}.
\item \textit{Partition coded scheme with merging (PCS-M):} This scheme is described in Section \ref{sec:coded_schemes}. 
\item \textit{Modified PCS with merging (MPCS-M):} This is an improvement over PCS-M where the dummy requests are not used but delivery of the head of the line requests in all non-empty queues is performed as in WPCS in \cite{Rezaei2016a}. 
\item \textit{Coded delivery LRU scheme (CDLS):} This the same as CDLS in \cite{Rezaei2016a}. The users have LRU caches. There is a separate queue of pending requests for each user at the server. At each transmission time, head of the line of queues of each user is checked for coding opportunity. If there is no coding opportunity, the request with the largest waiting time is fulfilled.
\item \textit{Coded delivery LRU scheme with merging (CDLS-M):} In this scheme, we include merging of requests for the same file in each user queue. The delivery procedure is the same as in CDLS.
\item \textit{Uncoded pre-fetch optimal with merging (UPO-M):} This scheme uses the coded-caching scheme in \cite{Yu2017a}. Also, there is merging of requests for the same file in the queue from each user.
\item \textit{LRU with merging and coded delivery (LRU-CM):} In this scheme, users have LRU caches. The missed requests from all the users are queued in one multicast queue at the server.  During delivery, coding opportunity between the requests in head of the line and the one immediately after it is searched for. The best of the coded or uncoded delivery in terms of the number of users served is chosen. This scheme should work at least as well as the LRU-M.
\end{enumerate}

\begin{figure}[!t]
\centering
\includegraphics[scale = 0.14]
{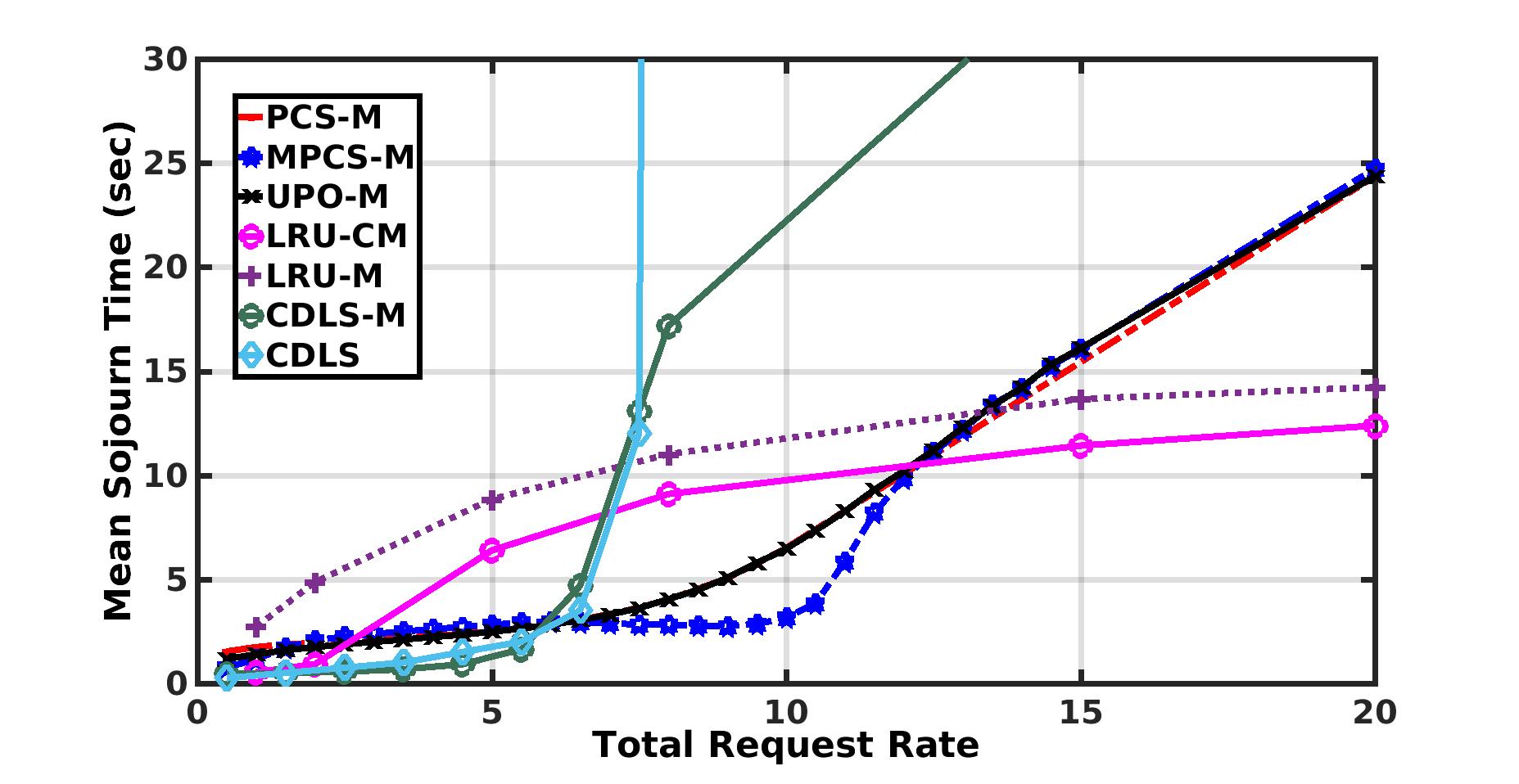}
\caption{Mean delay for various schemes. $M=100$, $C=40$, $L=10$, transmission rate $= 1$ file/sec and Zipf exponent $1$.}
\label{fig:compare_coded_schemes_1}
\end{figure}
  
\begin{figure}[!t]
\centering
\includegraphics[scale = 0.14]
{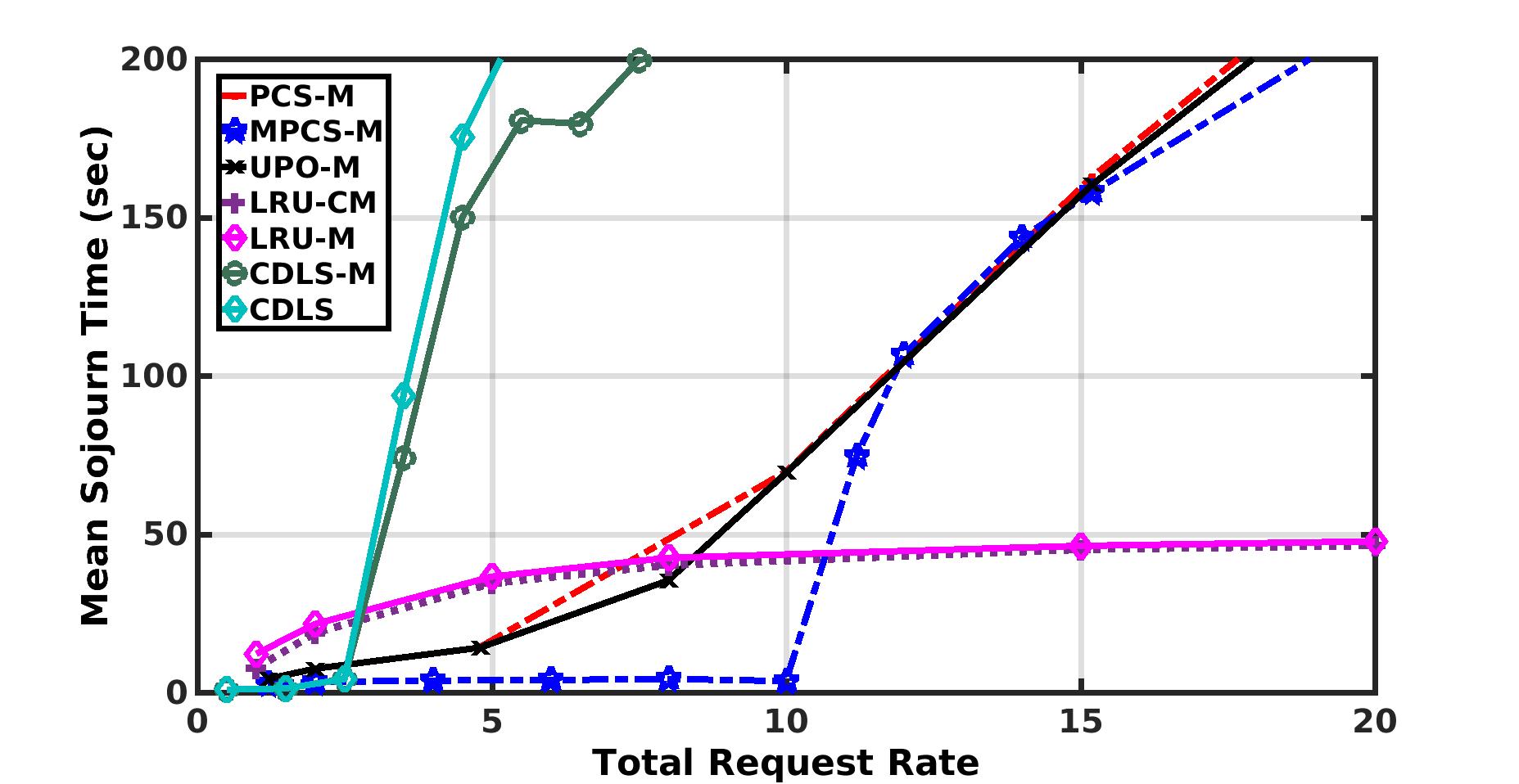}
\caption{Mean delay for various schemes. $M=200$, $C=25$, $L=40$, transmission rate $= 1$ file/sec and Zipf exponent $1$.}
\label{fig:compare_coded_schemes_2}
\end{figure}

\indent Figure \ref{fig:compare_coded_schemes_1} shows the mean sojourn times. This includes the transmission time in the BS queue, because now the transmission times of the coded and uncoded transmissions will be different for different schemes. Also, if a request is satisfied at a local cache, its sojourn time is zero. The system parameters are $M=100$, $C=40$, $L=10$, files are of the same size with transmission rate $= 1$ file/sec and the popularity profile is Zipf with exponent $1$. \\
\indent The schemes UPO-M and PCS-M have similar mean delay. MPCS-M is slightly better for low arrival rate, but, eventually has almost the same performance as the UPO-M and the PCS-Merge. We see that  CDLS at low rates is somewhat better than other schemes but becomes unstable as total arrival rate at the server increases.  All other schemes are stable because merging of requests establishes an upper bound on the number of requests in the queue. However, the mean delay for the above three schemes using coded multicasting is much lower at low arrival rates than LRU-M but as the rates increase, LRU-M and LRU-CM perform better. For low rates, performance  of CDLS-M and CDLS is similar. However, CDLS-M does not become unstable as arrival rates increase, but, performs worse than other schemes. \\
\indent If we consider a mean delay limit of $15s$, the  approximate maximum arrival rates are $7.5$ for CDLS, $7.6$ for CDLS-M, $14.8$ for PCS-M, MPCS-M and UPO-M. The maximum arrival rates is greater than $20$ for LRU-M and LRU-CM.\\
\indent In Figure \ref{fig:compare_coded_schemes_2}, we again compare the above schemes for $M=200$, $C=25$, $L=40$, transmission rate $= 1$ file/sec and file popularity follows Zipf distribution with exponent $1$. 
Here also, we see a similar comparison of different schemes as in Figure \ref{fig:compare_coded_schemes_1}.
\section{Conclusion}
\label{sec:conclusion}
We have studied a system where a server transmits files to multiple users over a shared wireless channel. The users have caches. New queuing models of the system are obtained when broadcast nature of the channel is fully exploited and when in addition, network  coding is also used. Our models have the distinct advantage that the system is always stable for all request arrival rates. Mean delay of the two systems are obtained theoretically and compared. For our models, we have observed that at lower arrival rates, the coded multicasting schemes provide less mean delay, but, at higher rates the uncoded multicasting eventually takes over.
\bibliographystyle{IEEEtran}
\bibliography{library.bib}

\begin{thebibliography}{10}
\providecommand{\url}[1]{#1}
\csname url@samestyle\endcsname
\providecommand{\newblock}{\relax}
\providecommand{\bibinfo}[2]{#2}
\providecommand{\BIBentrySTDinterwordspacing}{\spaceskip=0pt\relax}
\providecommand{\BIBentryALTinterwordstretchfactor}{4}
\providecommand{\BIBentryALTinterwordspacing}{\spaceskip=\fontdimen2\font plus
\BIBentryALTinterwordstretchfactor\fontdimen3\font minus
  \fontdimen4\font\relax}
\providecommand{\BIBforeignlanguage}[2]{{%
\expandafter\ifx\csname l@#1\endcsname\relax
\typeout{** WARNING: IEEEtran.bst: No hyphenation pattern has been}%
\typeout{** loaded for the language `#1'. Using the pattern for}%
\typeout{** the default language instead.}%
\else
\language=\csname l@#1\endcsname
\fi
#2}}
\providecommand{\BIBdecl}{\relax}
\BIBdecl

\bibitem{CVNI}
``Cisco visual networking index: global mobile data traffic forecast update
  2016-2021 white paper,'' 2016.

\bibitem{itube}
M.~Cha, H.~Kwak, P.~Rodriguez, Y.-Y. Ahn, and S.~B. Moon, ``I tube, you tube,
  everybody tubes: analyzing the world's largest user generated content video
  system,'' in \emph{Internet Measurement Conference}, 2007.

\bibitem{podlipnig}
S.~Podlipnig and L.~B\"{o}sz\"{o}rmenyi, ``A survey of web cache replacement
  strategies,'' \emph{ACM Comput. Surv.}, vol.~35, no.~4, pp. 374--398, Dec.
  2003.

\bibitem{Martina2016}
M.~Garetto, E.~Leonardi, and V.~Martina, ``A unified approach to the
  performance analysis of caching systems,'' \emph{ACM Trans. Model. Perform.
  Eval. Comput. Syst.}, vol.~1, no.~3, pp. 12:1--12:28, May 2016.

\bibitem{Fofack2012}
N.~C. Fofack, P.~Nain, G.~Neglia, and D.~Towsleyeee, ``{Analysis of TTL-based
  cache networks},'' \emph{Perform. Eval. Methodol. Tools (VALUETOOLS), 2012
  6th Int. Conf.}, pp. 1--10, 2012.

\bibitem{Fofack2014}
N.~C. Fofack, D.~Towsley, M.~Badov, M.~Dehghan, and D.~Goeckel, ``{An
  approximate analysis of heterogeneous and general cache networks},'' {Inria},
  Research Report RR-8516, Apr. 2014.

\bibitem{Poularakis2014}
K.~Poularakis, G.~Iosifidis, and L.~Tassiulas, ``{Approximation algorithms for
  mobile data caching in small cell networks},'' \emph{IEEE Trans. Commun.},
  vol.~62, no.~10, pp. 3665--3677, 2014.

\bibitem{Cui2017}
Y.~Cui and J.~Dongdong, ``Analysis and optimization of caching and multicasting
  in large-scale cache-enabled heterogeneous wireless networks,'' \emph{IEEE
  Trans. on Wireless Commun.}, vol.~16, no.~1, pp. 250--264, Jan 2017.

\bibitem{Yang2016}
C.~Yang, Y.~Yao, Z.~Chen, and B.~Xia, ``{Analysis on cache-enabled wireless
  heterogeneous networks},'' \emph{IEEE Trans. Wirel. Commun.}, vol.~15, no.~1,
  pp. 131--145, 2016.

\bibitem{Ji2015}
M.~Ji, G.~Caire, and A.~F. Molisch, ``{The throughput-outage tradeoff of
  wireless one-hop caching networks},'' \emph{IEEE Trans. Inf. Theory},
  vol.~61, no.~12, pp. 6833--6859, 2015.

\bibitem{Zhang2016}
L.~Zhang, M.~Xiao, G.~Wu, and S.~Li, ``{Efficient scheduling and power
  allocation for D2D assisted wireless caching networks},'' \emph{IEEE Trans.
  Commun.}, vol.~64, no.~6, pp. 2438--2452, 2016.

\bibitem{Maddah-Ali2014}
M.~A. Maddah-Ali and U.~Niesen, ``{Fundamental limits of caching},'' \emph{IEEE
  Trans. Inf. Theory}, vol.~60, no.~5, pp. 2856--2867, 2014.

\bibitem{Ji2016}
M.~Ji, G.~Caire, and A.~F. Molisch, ``{Fundamental limits of caching in
  wireless D2D networks},'' \emph{IEEE Trans. Inf. Theory}, vol.~62, no.~2, pp.
  849--869, 2016.

\bibitem{Yu2017a}
Q.~Yu, M.~A. Maddah-Ali, and A.~S. Avestimehr, ``{The exact rate-memory
  tradeoff for caching with uncoded prefetching},'' \emph{IEEE Trans. on Inf.
  Theory}, vol.~64, no.~2, pp. 1613--1617, Feb 2018.

\bibitem{Karamchandani}
N.~Karamchandani, U.~Niesen, M.~A. Maddah-Ali, and S.~Diggavi, ``{Hierarchical
  coded caching},'' \emph{IEEE Trans. on Inf. Theory}, vol.~62, no.~6, pp. 3212
  -- 3229, 2016.

\bibitem{Pedarsani2016}
R.~Pedarsani, M.~A. Maddah-Ali, and U.~Niesen, ``{Online coded caching},''
  \emph{IEEE/ACM Trans. Netw.}, vol.~24, no.~2, pp. 836--845, 2016.

\bibitem{Ji2017}
M.~Ji, A.~M. Tulino, J.~Llorca, and G.~Caire, ``{Order-optimal rate of caching
  and coded multicasting with random demands},'' \emph{IEEE Trans. Inf.
  Theory}, vol.~63, no.~6, pp. 3923--3949, 2017.

\bibitem{Niesen2017}
U.~Niesen and M.~A. Maddah-Ali, ``{Coded caching with non-uniform demands},''
  \emph{IEEE Trans. Inf. Theory}, vol.~63, no.~2, pp. 1146--1158, 2017.

\bibitem{Rezaei2016a}
F.~Rezaei and B.~H. Khalaj, ``{Stability, rate, and delay analysis of single
  bottleneck caching networks},'' \emph{IEEE Trans. Commun.}, vol.~64, no.~1,
  pp. 300--313, 2016.

\bibitem{Moghadam}
N.~Moghadam and H.~Li, ``Improving queue stability in wireless multicast with
  network coding,'' \emph{IEEE Inter. Conf. on Commun. (ICC)}, pp. 3382--3387,
  2015.

\bibitem{Hou}
I.-H. Hou, ``Broadcasting delay-constrained traffic over unreliable wireless
  links with network coding,'' \emph{IEEE/ACM Transactions on Networking},
  vol.~23, pp. 728--740, 2011.

\bibitem{Asmussen}
S.~Asmussen, \emph{Applied Probability and Queues}, 2nd~ed.\hskip 1em plus
  0.5em minus 0.4em\relax Springer-Verlag New York, 2003.

\bibitem{Tse}
D.~Tse and P.~Viswanath, \emph{Fundamentals of wireless communication}.\hskip
  1em plus 0.5em minus 0.4em\relax Cambridge Univ Press, 2005.

\end{thebibliography}
\end{document}